\newtheorem{myrule}{Rule}
\newtheorem{theorem}{Theorem}
\newtheorem{lemma}{Lemma}[theorem]
\author{Hans L. Bodlaender\thanks{Department of Information and Computing Sciences, Utrecht University, P.O.Box 80.089, 3508 TB Utrecht,
The Netherlands. 
Email: \href{mailto:h.l.bodlaender@uu.nl}{h.l.bodlaender@uu.nl}.
%and Department of Mathematics and Computer Science, Eindhoven University of Technology, Eindhoven,
%the Netherlands. 
The research of this author was partially supported by the {\em Networks} project, supported by the Netherlands Organization for Scientific Research N.W.O.} \and
Benjamin Burton%
\thanks{School of Mathematics and Physics, The University of Queensland
Brisbane QLD 4072, Australia.
Email: \href{mailto:bab@maths.uq.edu.au}{bab@maths.uq.edu.au}.
This author was supported by the Australian Research Council under the Discovery Projects scheme
(DP150104108).
} \and
Fedor V. Fomin\thanks{Department of Informatics, University of Bergen, 5020 Bergen, Norway \href{mailto:fedor.fomin@ii.uib.no}{fedor.fomin@ii.uib.no}. Supported by the Research Council of Norway via the project   ``MULTIVAL".}
\and
Alexander Grigoriev\thanks{School of Business and Economics, Maastricht University, Maastricht, The Netherlands \href{mailto:a.grigoriev@maastrichtuniversity.nl}{a.grigoriev@maastrichtuniversity.nl}}}
\title{Knot Diagrams of Treewidth Two%
\thanks{A large part of this research was done during the workshop {\em Fixed Parameter Computational
Geometry} at the Lorentz Center in Leiden.}}
\date{}
\begin{document}

\maketitle

\begin{abstract}
%In this paper, we study knot diagrams for which the underlying graph
%has treewidth two. We give linear time algorithms for the following problems:
%given a knot diagram of treewidth two, does it represent the unknot; for a one-dimensional link
%diagram of treewidth two, does it represent the unlink; given two knot diagrams;
%for two knot diagrams of treewidth two, do they represent equivalent knots? 
In this paper, we study knot diagrams for which the underlying graph has treewidth two. We give a linear time algorithm for the following problem: given a knot diagram of treewidth two, does it represent the unknot? 
We also show that for a link diagram of treewidth two we can test in linear time if it represents the unlink. From the algorithm, it follows that a
diagram of the unknot of treewidth 2 can always be reduced to the trivial diagram with at most $n$ (un)twist and (un)poke Reidemeister moves.

\end{abstract}

\section{Introduction}
\label{section:introduction}
A \emph{knot} is a
piecewise linear closed curve $S^1$ embedded
 into the $3$-sphere $S^3$ (or the three-dimensional Euclidean space $\mathbb{R}^3$).
 Two knots are said to be \emph{equivalent} if there is an ambient isotopy between them.
 In other words, two knots are equivalent if it is possible to distort one knot into the other without breaking it.
 The basic problem of knot theory is the following unknotting problem: given a knot, determine whether it is  equivalent to a knot that  bounds an embedded disk in $S^3$. Such a knot is called \emph{unknot}.

 Despite a significant progress, the computational complexity of the unknotting problem remains open. Even the existence of \emph{any} algorithm for this problem is a highly non-trivial question. As was stated by Turing in 1954 in \cite{turing1954solvable}, ``No systematic method is yet known by which one can tell whether two knots are the same." The first algorithm resolving this problem is due to Haken~\cite{Haken61}.
  By the celebrated result of Hass,   Lagarias,   and Pippenger~\cite{HassLP99},  unknot  recognition is in NP.  The problem is also suspected to be in 
 %\todo[inline]{Agol gave a sketch of the proof that the problem is also in  
co-NP (assuming the Generalized Riemann Hypothesis), see  the work of Kuperberg \cite{Kuperb14}. However, no polynomial algorithm for the unknotting problem is known.
 
 It was understood already in 1920s that the 
 question about equivalence of  knots in $\mathbb{R}^3$ is  reducible to a combinatorial question about knot diagrams~\cite{AlexanderBr26,Reidemester27}.  Knot diagrams  are labeled planar graphs  representing a projection of the knot onto a plane. Thus every vertex of the graph in knot diagram is of degree 4 and edges are marked as overcrossing  and undercrossing, see Section~\ref{section:preliminaries} for a formal definition.
%  
% A \emph{knot diagram} is a piecewise linear projection of a knot onto the plane, where the only multiple points are crossings at which one section of the knot crosses another transversely.
% 
 It is one of the most fundamental theorems in knot theory from 1920s 
that any
two diagrams of a knot or link in $\mathbb{R}^3$ differ by a sequence of Reidemeister moves \cite{Reidemester27}, 
illustrated in Fig.~\ref{figure:reidemeister}.  We refer to these moves as (I) twist moves, (II) poke moves, and (III) slide moves, with the reverse operation of a twist move the untwist, and the reverse operation of a poke the unpoke.

\begin{figure}[htb]
\begin{center}
\includegraphics[scale=0.8]{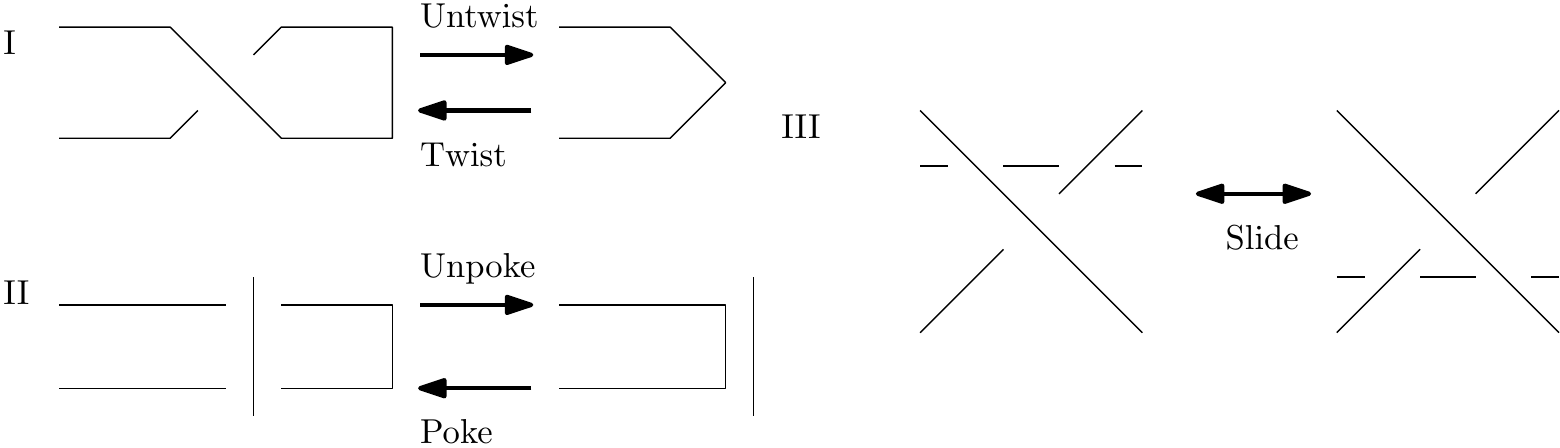}
\end{center}
\label{figure:reidemeister}
\caption{Reidemeister moves}
\end{figure}
 
% Thus the most natural approach to the unknot recognition problem would be to try possible is to try to find an explicit upper bound on the number of Reidemeister moves required to turn a given diagram of the unknot with c crossings into the trivial diagram.
In particular, the diagram of every unknot can be reduced to the trivial diagram (a circle) by performing 
Reidemeister moves.
While each of the Reidemeister moves  can be performed in polynomial time, it is very unclear how many of these moves are required to move an unknot to the trivial diagram.  The problem is that sometimes a successful unknotting sequence of  Reidemeister moves is not monotone, that is, it  has to increase the number of crossings (vertices) in the knot diagram, see e.g.~\cite{Lack15}. Bounding the number of required Reidemeister moves by any function on the number of vertices in the knot diagram was a long-standing open question in the area. The answer to this question was given by Hass and  Lackenby~\cite{HassL01} who gave the first (exponential) upper bound on the number of Reidemeister moves. Later Lackenby in \cite{Lack15} improved the bound significantly by showing that 
any diagram of the unknot with $n$ crossings may be reduced to the trivial diagram using at most 
$(236n)^{11}$ Reidemeister moves. Let us note that this also implies that unknot recognition problem is in NP. 

In this work we consider the unknotting problem when the given knot diagram has treewidth at most $2$. We defer the definition of treewidth till the next section. Our main algorithmic result is the following theorem.
\begin{theorem}\label{thm:mainalgo}
Deciding whether any diagram with $n$ crossings and treewidth at most $2$ is a diagram of the unknot can be decided in time $O(n)$.
\end{theorem}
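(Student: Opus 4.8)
The plan is to exploit the very restrictive structure of treewidth-two graphs. The key fact is that graphs of treewidth at most two are exactly the graphs with no $K_4$ minor, and they are built up from edges by the operations of taking disjoint unions, identifying vertices, and—most usefully—series and parallel compositions. Since a knot diagram is a 4-regular planar graph, the plan is to analyze how this series-parallel structure interacts with the 4-regularity and the over/undercrossing labels at each vertex. I would first establish a normal form: decompose the diagram along its 2-cuts into a tree of \emph{triconnected-like} pieces, noting that the only 3-connected graph of treewidth two is a single edge, so every nontrivial piece is either a series or a parallel composition.

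The main idea is to design a small set of local reduction \emph{rules} (this is why the preamble declares a \texttt{Rule} theorem environment) that each correspond to a Reidemeister move or a short sequence of them, and that each strictly decrease the number of crossings. First I would identify reducible configurations forced by the treewidth-two structure: a degree-one-style feature in the decomposition corresponds to a crossing that can be removed by an untwist (Reidemeister I), and a parallel pair of strands meeting at two consecutive crossings with compatible labels corresponds to an unpoke (Reidemeister II). The claim I would aim to prove is that any treewidth-two diagram that is not already the trivial circle admits at least one such reduction; this is the structural heart of the argument. Establishing this requires showing that the absence of a $K_4$ minor, together with 4-regularity, forces a low-complexity ``ear'' in the series-parallel decomposition on which one of the rules fires.

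Given such a completeness lemma for the rules, correctness follows because each rule is an (un)twist or (un)poke, hence produces an equivalent diagram; so the diagram is the unknot if and only if the fully reduced diagram is the trivial circle. For the linear-time bound, I would maintain the decomposition (equivalently, an SPQR-type tree or the block structure) as a dynamic data structure and apply reductions bottom-up, charging each reduction to the crossing it removes. Since each of the at most $n$ crossings is removed at most once and each rule fires in amortized constant time after locating a reducible piece, the total running time is $O(n)$; this simultaneously yields the promised bound of at most $n$ (un)twist and (un)poke moves, with no slide moves needed.

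I expect the main obstacle to be the completeness lemma: proving that every nontrivial treewidth-two diagram contains a configuration on which one of the local rules fires. The subtlety is that the crossing labels (which strand is over, which is under) can block the naive topological reduction—a parallel bigon need not be removable by a single Reidemeister~II move if the crossings have ``opposite'' over/under patterns—so the rule set must be rich enough to handle, or route around, every label pattern that the series-parallel structure permits. Carefully enumerating these label cases at the innermost ear of the decomposition, and verifying that 4-regularity rules out the obstructing patterns, is where the real work lies.
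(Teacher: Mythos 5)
Your overall architecture --- reduce along the series-parallel structure, fire local rules at low-degree features, charge each rule to the work it does, and maintain the decomposition for an $O(n)$ bound --- matches the paper's. But your proof hinges on a ``completeness lemma'' asserting that every nontrivial treewidth-two diagram admits a local rule that \emph{strictly decreases the number of crossings} via an (un)twist or (un)poke, and this is precisely the step that does not go through. You have correctly identified the obstruction yourself: a bigon whose two crossings have opposite over/under patterns is not removable by a Reidemeister II move. The paper does not ``rule out the obstructing patterns''; they genuinely occur (e.g.\ the standard trefoil diagram is treewidth two and admits no crossing-decreasing local move at all, and partially reduced unknot diagrams can locally look the same). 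So a monotone, crossing-decreasing rule set cannot be complete, and the real work is not in enumerating label cases to exclude these configurations but in handling them.

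The paper's way around this is a piece of machinery your proposal lacks: \emph{generalized knot diagrams}, in which a pair of parallel strands with an alternating sequence of crossings is collapsed into a single \emph{double edge} carrying an integer twist count (plus $\{u,d\}$ endpoint labels). With this bookkeeping, every reduction rule removes a vertex from the underlying simple graph $G^s$ --- which is what guarantees termination in $n$ rounds and linear time --- even when \emph{no} Reidemeister move is performed and the number of crossings does not decrease. The irreducible bigon you worry about simply becomes a double edge with $2$ twists; twists on double edges later cancel against each other (the agreeing case of the rule for a degree-2 vertex between two double edges performs $\min\{i,j\}$ unpokes) or accumulate ($i+j$ in the disagreeing case), and the terminal configurations are classified arithmetically: twist count $0$ gives an unlink component, $1$ gives an untwist to the unknot, odd $m\neq 1$ gives the $(m,2)$-torus knot, even $m\neq 0$ gives a link. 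This final classification is also what lets the algorithm answer ``no'' for nontrivial knots, something your rule set has no mechanism for. The bound of at most $n$ Reidemeister moves then falls out because every move the rules ever perform removes at least one crossing, not because every rule application is a move. Without the double-edge abstraction (or an equivalent device for deferring blocked Reidemeister II moves), your key lemma is false and the argument does not close.
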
 
 
Our proof yields also the following combinatorial result about the number of Reidemeister moves. It is interestingly to note that in Theorem~\ref{thm:maincomb} we do not use the slide move.
 
\begin{theorem}\label{thm:maincomb}
Any diagram of treewidth $2$  of the unknot with $n$ crossings may be reduced to the trivial diagram using at most $n$ (un)twist and (un)poke Reidemeister moves.
 \end{theorem}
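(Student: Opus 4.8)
The plan is to extract the bound directly from the reduction procedure that drives the linear-time algorithm of Theorem~\ref{thm:mainalgo}. That algorithm decides unknottedness by repeatedly locating a local reducible configuration and simplifying it; to obtain Theorem~\ref{thm:maincomb} I would show that (i) on a treewidth-$2$ diagram such a configuration always exists as long as the diagram is nontrivial, (ii) each simplification is realized by a single \emph{untwist} or \emph{unpoke} move and \emph{strictly decreases} the number of crossings, and (iii) each move keeps the diagram within the class of treewidth-$2$ unknot diagrams, so the argument can be iterated. The counting in (ii) is what forces the clean bound: if the reduction terminates at the trivial (crossingless) diagram after $a$ untwists and $b$ unpokes, then since an untwist deletes one crossing and an unpoke deletes two, we have $a + 2b = n$, whence the total number of moves is $a + b \le a + 2b = n$. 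That the whole reduction can be carried out monotonically, never creating new crossings, is exactly what distinguishes the treewidth-$2$ case from the general situation described in the introduction, and it is also why no slide (Reidemeister~III) move is ever needed.

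The structural engine behind (i) is the interaction between $4$-regularity and treewidth~$2$. The underlying graph of a knot diagram is a $4$-regular planar multigraph, and any graph of treewidth at most $2$ has a vertex with at most two distinct neighbours. Since every vertex here has total degree $4$, such a low-neighbourhood vertex must carry either a \emph{self-loop} or a pair of \emph{parallel edges}. A self-loop is precisely a kink and is removed by an untwist; a pair of parallel edges bounds a bigon. Thus every nontrivial treewidth-$2$ diagram exposes a loop or a bigon, which is the combinatorial source of the reduction rules. Deleting a self-loop cannot raise treewidth, and the local smoothing performed by an unpoke deletes two vertices and reconnects the four incident strands within the existing series--parallel structure, so treewidth~$2$ is preserved; this gives part (iii).

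The main obstacle is that a bigon need not be directly \emph{unpokeable}: if the two crossings of the bigon have opposite over/under behaviour it is a \emph{clasp} rather than a poke, and a clasp is not removed by any single Reidemeister~II move. Here the hypothesis that the diagram represents the \emph{unknot} has to be used in an essential way. The approach I would take is inductive along the tree (or series--parallel) decomposition: a clasp that could never be undone would witness genuine knotting or linking localized inside a bounded region of the plane, contradicting triviality of the whole diagram. More precisely, I would show that a nontrivial treewidth-$2$ diagram of the unknot always contains either a kink or a \emph{poke} bigon---possibly only after first untwisting a nearby kink to expose it---so that some monotone untwist or unpoke is always available. Making this dichotomy precise, by classifying how clasps can sit inside a treewidth-$2$ unknot diagram and ruling out a configuration consisting of clasps alone, is the technical heart of the proof and is where I expect the real work to lie. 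Once this is established, iterating the reduction and invoking the counting argument above yields the bound of at most $n$ moves and completes the proof.
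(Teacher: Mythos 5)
Your counting argument is exactly the one the paper relies on: every move performed is an untwist (removing one crossing) or an unpoke (removing two), the reduction is monotone, and $a+2b=n$ gives $a+b\le n$. That part is fine and matches the paper. The gap is in your step (i)/(ii): you assert that a nontrivial treewidth-$2$ unknot diagram always exposes a kink or an \emph{unpokeable} bigon (``possibly only after first untwisting a nearby kink''), you correctly identify the clasp obstruction, and then you defer the resolution of that obstruction to ``the technical heart of the proof.'' That heart is precisely what is missing. Ruling out all-clasp configurations by a direct classification is not easy: two parallel edges in the abstract graph need not bound an empty face, a clasp is not removable by any single move, and whether a monotone move becomes available depends on how \emph{non-adjacent} twist regions interact after intervening structure is collapsed. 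So as written the proposal is a plan with its central lemma unproved.

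The paper sidesteps this pointwise claim entirely. It works with \emph{generalized knot diagrams} in which a maximal twist region (a clasp chain) is compressed into a single double edge carrying an integer twist count. Clasps are never attacked head-on: they are tolerated and merely re-encoded (Rule~\ref{rule:degree2ss}, cases 5--7, and the disagreeing case of Rule~\ref{rule:5} perform \emph{no} Reidemeister moves, only bookkeeping). Actual moves are issued only when they are guaranteed monotone: untwists when a twist region becomes a loop at a graph-degree-one vertex (Rule~\ref{rule:degree1A}), unpokes when two twist regions of opposite sense meet and cancel ($\min\{i,j\}$ unpokes in Rule~\ref{rule:5}), and a final untwist in the terminal classification (Rules~\ref{rule:4} and~\ref{rule:7}). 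If the accumulated twist count at the end is not $0$ or $1$, the diagram is a nontrivial knot or link, which is how the unknot hypothesis enters --- not through a local structure theorem about clasps. If you want to salvage your route, the honest statement of what you must prove is: in every nontrivial treewidth-$2$ unknot diagram some sequence of purely combinatorial contractions exposes a cancelling pair of twist regions or a reducible loop; that is essentially the content of the paper's rule system, and your proposal does not yet contain an argument for it.
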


 Actually, the techniques developed to prove Theorems~\ref{thm:mainalgo} and \ref{thm:maincomb} can be used to solve a slightly more general problems about links, that is  collection of knots which do not intersect,  with diagrams of treewidth $2$.
 
 \medskip\noindent
 \textbf{Related work.} To the best of our knowledge, the question whether the unknotting problem with diagrams of bounded treewidth can be resolved in polynomial time is open.  Makowsky and   Mari\~{n}o
in  \cite{MakowskyM03} studied the parametrized complexity of the knot (and link) polynomials known as Jones polynomials, Kauffman polynomials and HOMFLY polynomials on graphs of bounded treewidth. 
 Ru\'{e} et al.~\cite{RueTV18}  
studied the class of link-types that admit a $K_4$-minor-free diagram (which is of treewidth at most $2$).
They  obtain counting formulas and asymptotic estimates for the connected $K_4$-minor-free link and unknot diagrams.

 %https://en.wikipedia.org/wiki/Unknotting_problem
% 
% While there are several unknotting algorithms, see \cite{Haken61,Jaco95}, the major unresolved challenge is whether there existing a polynomial time algorithm solving the  unknotting problem.
% By the celebrated result of Hass,   Lagarias,   and Pippenger~\cite{HassLP99},  unknot  recognition is in NP. 
% \todo[inline]{Agol gave a sketch of the proof that the problem is also in  
%co-NP , (see also an alternative approach   assuming the Generalised Riemann Hypothesis by Kuperberg \cite{Kuperb14}).}
% 
% 
% 
% 
%Introduction to the problem book on knots and links \cite{MR0515288}
%
%\cite{MakowskyM03}
%
%\cite{de2018tree}
%
%Introduction to graphs of treewidth two
%Maybe cite \cite{ArnborgP86}
%
%Introduction to knot theory. Rademeister moves. Equivalence of knot diagrams
%
%Description of our main results...
%
%\todo[inline]{Our algorithm: monotone sequence of 
%Reidemeister moves. Not true in general. 
%}
%\todo[inline]{
%Comparison to Ru\'{e} et al.~\cite{RueTV18}. Research done independently; graph class is the same (a graph has treewidth at most two, if and
%only if it does not contain $K_4$ as a minor); our paper gives a linear time algorithm to classify a diagram, while their paper enumerates
%}

\section{Preliminaries}
\label{section:preliminaries}

\subsection{Treewidth}
We first define the notion of treewidth, as introduced by Robertson and Seymour~\cite{RobertsonS2}. There are many equivalent definitions, see e.g., \cite{Bodlaender98}; the most common definition is the following.

A tree decomposition of a graph $G=(V,E)$ is a pair $(\{X_i~|~i\in I\}, T=(I,F))$, with $\{X_i~|~i\in I\}$ a family of subsets of $V$, and $T$ a tree, such that $\bigcup_{i\in I} X_i=V$, for all $\{v,w\}\in E$, there is an $i\in I$ with $v,w\in X_i$, and for all $v\in V$, the set $\{i\in I~|~v\in X_i\}$ forms a connected subtree of $T$. The {\em width} of a tree decomposition is $\max_{i\in I} |X_i|-1$, and the {\em treewidth} of a graph $G$ is the minimum width of a tree decomposition of $G$.

In this paper, we focus on graphs of treewidth two. We do not need the representation by tree decompositions in this paper, but instead rely on a simple procedure that can be used to recognize graphs of treewidth 2 (Theorem~\ref{theorem:folklorereduce}).
% Also useful is a characterization of treewidth 2 graphs by $K_4$ as forbidden minor (Theorem~\ref{theorem:k4}). %Not needed
Graphs of treewidth two are sometimes also called {\em series-parallel}, but as there are different definitions of what is a series-parallel graph (e.g., whether $K_{1,3}$ is series-parallel depends on the used definition), avoid ambiguity by using treewidth terminology.

The following result is well known.

\begin{theorem}[Folklore, see e.g.,~\cite{Bodlaender98}]
A graph $G=(V,E)$ has treewidth at most two, if it can be reduced to the empty graph by repeating the following operations, while possible:
\begin{itemize}
\item Remove a vertex of degree 0.
\item Remove a vertex of degree 1 and its incident edge.
\item Contract a vertex of degree 2 with a neighbor (possibly creating a parallel edge).
\item Remove one of a pair of parallel edges.
\end{itemize}
\label{theorem:folklorereduce}
\end{theorem}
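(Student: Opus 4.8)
The plan is to prove the statement in both directions, so that the reduction procedure becomes a genuine recognizer: a graph reduces to the empty graph under the four operations \emph{if and only if} it has treewidth at most two. The literal implication in the statement is the ``reducible $\Rightarrow$ treewidth $\le 2$'' direction, but for the procedure to be useful one also needs that a treewidth-two graph can never get stuck at a nonempty graph. Two standard facts will do all the work: (i) taking subgraphs and contracting edges never increases treewidth; and (ii) every nonempty graph of treewidth at most two is $2$-degenerate, i.e. it has a vertex of degree at most two. I will also use that each of the four operations strictly decreases $|V|+|E|$, so every sequence of reductions terminates.

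For the stated implication I would argue backwards along a reduction sequence ending in the empty graph, showing that each operation, \emph{read in reverse}, preserves the property of admitting a tree decomposition of width at most two. Concretely, suppose the graph $G'$ obtained after one operation has such a decomposition; I reconstruct one for the graph $G$ before the operation by attaching a single small bag. Reinstating an isolated vertex $v$: attach a new bag $\{v\}$ anywhere. Reinstating a degree-one vertex $v$ with neighbour $u$: attach a new bag $\{u,v\}$ to any bag containing $u$. Reinstating a parallel edge: nothing changes, since the endpoints already share a bag. The only interesting case is un-contracting a degree-two vertex $v$ with neighbours $u,w$, where $v$ had been merged into $u$, creating the edge $\{u,w\}$ in $G'$; since $\{u,w\}$ is an edge of $G'$, some bag contains both $u$ and $w$, and attaching a new bag $\{u,v,w\}$ to it covers the edges $\{u,v\}$ and $\{v,w\}$ while keeping $v$ in a single bag. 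In every case the new bag has size at most three, so the width stays at most two. As the empty graph trivially has treewidth at most two, induction up the sequence gives $\operatorname{tw}(G)\le 2$.

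For the converse I would use fact (i) in the forward direction: each operation produces a subgraph or a minor, so it never raises the treewidth, and thus every graph appearing during the reduction of a treewidth-two graph still has treewidth at most two. It then remains to show that such a process cannot halt before reaching the empty graph, i.e. that a \emph{nonempty} graph $H$ of treewidth at most two always admits at least one of the four operations. If $H$ has a pair of parallel edges, remove one. Otherwise $H$ is simple, and by fact (ii) it has a vertex of degree at most two, to which the degree-zero, degree-one, or degree-two operation applies. Hence no maximal reduction sequence can get stuck, and since $|V|+|E|$ strictly decreases it must terminate at the empty graph.

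The crux of the argument is fact (ii), the $2$-degeneracy statement, since everything else is bookkeeping with tree decompositions. I would prove it by the usual leaf-bag surgery: take a tree decomposition of width at most two with as few bags as possible, and let $X_i$ be a leaf bag with parent $X_j$. Minimality forces some vertex $v\in X_i\setminus X_j$, and the connectivity condition then confines $v$ to the single bag $X_i$; hence all neighbours of $v$ lie in $X_i\setminus\{v\}$, giving at most two distinct neighbours and so $\deg(v)\le 2$ in the simple graph $H$. An alternative route to the whole theorem is through the characterization of treewidth-at-most-two graphs as exactly the graphs with no $K_4$ minor: one checks that a simple graph in which every vertex has degree at least three contains a $K_4$ minor, which is the same ``cannot get stuck'' phenomenon phrased via forbidden minors. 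I expect the degeneracy lemma to be the only step requiring genuine care; the reverse-preservation casework is routine.
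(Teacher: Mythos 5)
The paper offers no proof of this statement at all: it is labelled folklore and dispatched with a citation to~\cite{Bodlaender98}, so there is nothing in the text to compare your argument against. Your proof is correct and is essentially the standard argument one would find in the cited survey. You rightly observe that although the theorem is phrased as a single implication, the ``while possible'' clause only makes it a usable recognizer if the converse also holds, and you supply both directions: the reducible-$\Rightarrow$-width-two direction by reverse induction along the reduction sequence, gluing on a bag of size at most three at each undone step (the only nontrivial case being the un-contraction, where the edge $\{u,w\}$ created by the contraction guarantees a bag containing both neighbours); and the converse via closure of treewidth under minors plus the $2$-degeneracy of simple graphs of treewidth at most two, which you prove correctly by the leaf-bag argument on a decomposition with a minimal number of bags. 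Two small matters you gloss over, neither of which is a real gap: the degenerate multigraph cases (a degree-two vertex whose two edges both go to the same neighbour, where the contraction creates no new edge $\{u,w\}$ and the reinstated bag is just $\{u,v\}$; and the single-bag decomposition in the degeneracy lemma, where every vertex already has at most two neighbours), and the fact that termination needs only that $|V|+|E|$ decreases, which you do state. Your proof could be inserted verbatim where the paper currently relies on the citation.
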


Notice, the operations from Theorem~\ref{theorem:folklorereduce} are applied to the vertices of degree up to two. Thus, for attacking the unknoting problem on a knot diagram of treewidth two, we can apply the above reduction operations to the small degree vertices of the knot diagram while maintaining ambient isotopy of the knots. This results in small unknots, knots, unlinks and links which are trivially recognizable.

A {\em subdivision} in a graph $G=(V,E)$ is a vertex of degree two. The operation to {\em add a subdivision} is the following: take an edge $\{v,w\}$, and replace this edge by edges $\{v,x\}$ and $\{x,w\}$ with $x$ a new vertex. The operation to {\em remove} a subdivision is the following: take a vertex of degree 2, add an edge between its neighbors and then remove the vertex and its incident edges.

\subsection{Knot diagrams}
In this section, we introduce the notion of a knot (link) diagram. In the literature, variations on this definition are used, where the most common definition is as follows. The knot (link) diagram is an immersed plane curve which is a projection of a knot (link) on the plane with the additional data of which strand/string is over and which is under at each crossing. We slightly extend this definition keeping precisely the same expressive power. Namely, we introduce a plane graph describing the knot diagram. This graph is useful for our algorithm.

Given a projection of a knot on the plane, we put a vertex at each string crossing. Conventionally, we assume that the knot projection is such that every vertex represents a crossing of exactly two strings. Two vertices are adjacent if they represent two consecutive crossings on a string. Thus, all vertices introduced so far have degree four. Notice, two consecutive crossings on a string could be at the same vertex, meaning there is a (self)loop on that vertex. Moreover, two vertices might represent consecutive crossings on more than one string, meaning there are parallel or multi-edges. To make the graph simple, for each loop and each multi-edge we introduce a degree two vertex subdividing the loop/multi-edge. Now, each vertex of a simple graph have either degree two or degree four. We refer to the resulting undirected simple plane graph $G=(V,E)$ as a {\em knot diagram}. 

For keeping the information which string is over and which string is under at each crossing, we label the endpoints of the edges. Consider a vertex of degree four representing a crossing of two strings. Consider an edge representing an overcrossing string at that vertex. We label the endpoint of the overcrossing edge with $u$ (for ``up''). The endpoints of the undercrossing edges are labeled with  $d$ (for ``down''); see Figure~\ref{figure:degree4-normaltype} for an illustration. Now, we label the endpoints at vertices of degree two. If an edge has one (labeled) endpoint at a vertex of degree four and an endpoint at a vertex of degree two, the endpoint at the vertex of degree two receives the same label as the endpoint at the vertex of degree four. Notice, the labels of the same edge at two endpoints might be different. 

\begin{figure}[htb]
\begin{center}
\includegraphics[scale=0.8]{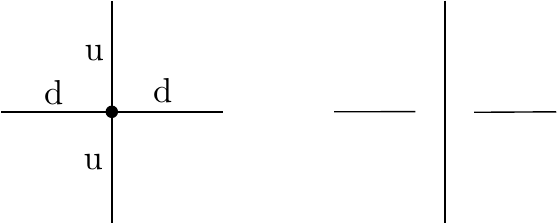}
\end{center}
\caption{A vertex of degree four representing a crossing of two strings}
\label{figure:degree4-normaltype}
\end{figure}

%\begin{theorem}
%{\em Needed: theorem that tells that we can take the sum if there is an edge separator of size 2}
%\label{theorem:edgeseparator}
%\end{theorem}
%
%Note that in a knot diagram, we have two types of vertices:
%\begin{itemize}
%\item Vertices of degree 2
%\item Vertices of degree 4, where the four incident edges are paired, as described above, with one pair labeled up and one
%pair labeled down.
%\end{itemize}

\subsection{Generalized knot diagrams}
Our algorithm is based upon a generalization of knot diagrams, which we call {\em generalized knot diagrams}. The main ingredient is a new type of edges, which are created in the course of the algorithm. While a {\em single} edge in a knot diagram represents a piece of a single string, a {\em double} edge in a generalized knot diagram represents two pieces of strings between two pairs of vertices of degree two. Specifically, consider any two pieces of strings not intersected by any other piece of string. Let the two pieces of strings have the (four) endpoints at vertices of degree two. Moreover, let the strings alternate at every two consecutive crossings with respect to over- and under-crossing, i.e., if string $s$ is over-crossing string $s'$ at a crossing, then at the next (consecutive) crossing $s'$ is over-crossing $s$. In accordance with Reidemeister terminology, we refer to these alternating crossings as {\em twists}. Such two strings with twists between two pairs of vertices of degree two are referred as double edges.

For each double edge we create an integer label that gives the number of twists/crossings in the double edge.  If the two pieces of string do not cross, the label is zero. 
With labelings of endpoints of the strings with $u$ (up) and $d$ (down), we can distinguish between overcrossings and undercrossings; details are given later in this section.
 
See Figure~\ref{figure:doublewithtwists} for an illustration how a double edge represents two pieces of string with three twists.
\begin{figure}[htb]
\begin{center}
\includegraphics[scale=1.0]{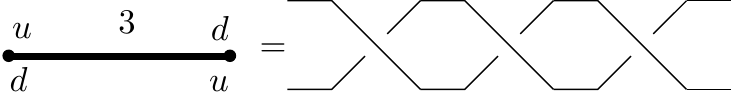}
\end{center}
\caption{A three-twist double edge and its string representation}\label{figure:doublewithtwists}
\end{figure}

In the generalized knot diagram we identify a pair of degree two vertices associated with an endpoint of a double edge as one {\em double vertex}, thus creating a new simple graph with a mix of knot diagram ({\em single}) vertices, double vertices, single and double edges, where double edges are labeled with numbers of twists. 

In the construction and in the algorithm, we will ensure that a double vertex is never incident to three double edges, or to two double edges and one single edge. Thus, a double vertex is either incident to exactly one double edge, or two double edges, or one double and two single edges. The different cases are illustrated in Figure~\ref{figure:doublevertex}.

\begin{figure}[htb]
\begin{center}
\includegraphics[scale=1.0]{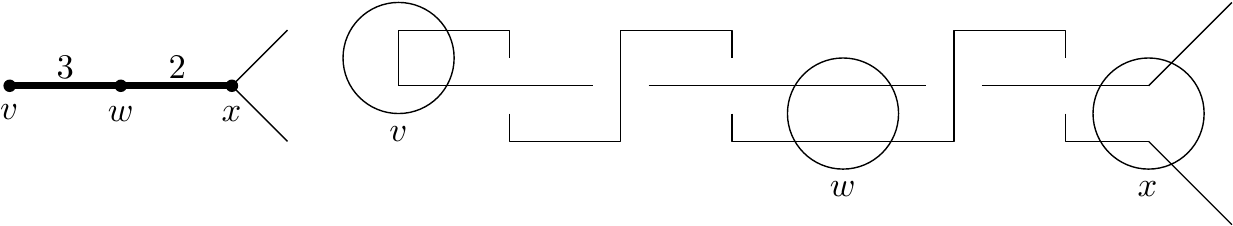}
\end{center}
\caption{The different cases for a double vertex}\label{figure:doublevertex}
\end{figure}

%HB: I do not think this is necessary.
%Notice, generally, double vertices might be incident to three double edges or to two double and one single edge. For instance, if vertices $v$ and $w$ form an endpoint of double edge $e$ while $v$ is also shared by a double edge $e'$ and $w$ is shared by a double edge $e''\neq e'$, the double vertex formed by $v$ and $w$ is incident to three double edges, $e, e', e''$. This creates ambiguity when $v$ and $w$ are treated as a pair and not individually. To avoid the ambiguity and information loss, in the constructions below we make sure that a double vertex is either a pair of adjacent vertices or incident to one double edge and two single edges (one per vertex) or it is an endpoint of exactly two double edges. We refer to this requirement as {\em unambiguity property}. The unambiguity property is initially guaranteed as we start.
 
We also distinguish two types of degrees of vertices in knot diagrams. The {\em graph degree} of a vertex is the number of incident single and double edges. The {\em diagram degree} of a vertex is the number of incident single edges plus twice the number of incident double edges. In a generalized knot diagram, each vertex has diagram degree two or four. Hence, in a generalized knot diagram we have the following types of vertices.
\begin{itemize}
\item A double vertex incident to a double edge. It has graph degree one, and diagram degree two. This is the case when the double vertex is formed by a pair of adjacent vertices. (Vertex $v$ in Figure~\ref{figure:doublevertex}.)
\item A single vertex incident to two single edges. It has graph degree two and diagram degree two.
\item A double vertex incident to two double edges. It has graph degree two and diagram degree four. 
(Vertex $w$ in Figure~\ref{figure:doublevertex}.)
\item A double vertex incident to one double edge and two single edges. It has graph degree three and diagram degree four.
(Vertex $x$ in Figure~\ref{figure:doublevertex}.)
\item A single vertex incident to four single edges. (See Figure
\ref{figure:degree4-normaltype}.)
\end{itemize}

The over- and under-crossing labels at the endpoints of double edges are determined by the $\{u,d\}$-labeling of the endpoints of the two pieces of strings. Consider an endpoint of a double edge at a double vertex formed by two single vertices, $x$ and $y$, both of diagram degree two. Assuming an initial global ordering of single vertices in the knot diagram, without loss of generality, let $x$ precedes $y$ in that order. Then, if the endpoint of a single edge at $x$ has label $u$, we assign to the endpoint of the double edge label $ud$; and $du$ for otherwise. Notice, given a number of twists in a double edge and a $\{u,d\}$-label of any of the single endpoints of the knot diagram, the labels of all other single endpoints and the labels of the double edge endpoints are uniquely determined; see Figure~\ref{figure:doubleedgevertexlabels}.  

\begin{figure}[htb]
\begin{center}
\includegraphics[scale=1.0]{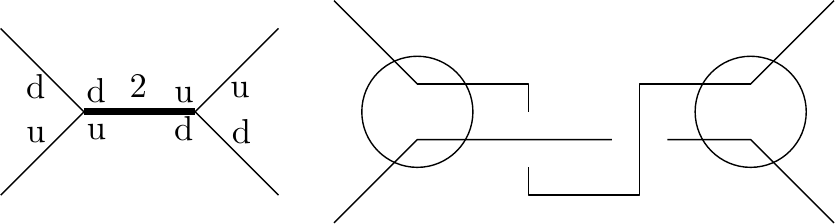}
\end{center}
\caption{Markings determine over- or under-twists}\label{figure:doubleedgevertexlabels}
\end{figure}

When addressing the unknotting (and unlinking) problem, we will apply the reduction operations from Theorem~\ref{theorem:folklorereduce} transforming a knot diagram into a {\em generalized knot diagram}. Notice, a knot diagram is a special type of a generalized knot diagram, namely the one where each edge is a single edge. 

\subsection{Reidemeister moves}
With help of Reidemeister moves, see Figure~\ref{figure:reidemeister}, we obtain an equivalence relation on knot diagrams: if a diagram can be obtained from another by zero or more Reidemeister moves, then these diagrams are equivalent. A diagram is said to be an unknot, if and only if it is equivalent to a diagram with only vertices of diagram degree 2. To simplify notations, we also allow to add subdivisions of edges, or to remove vertices of diagram degree 2 and connect their neighbors.

\section{Main Algorithm}
\label{section:algorithm}
We first give the main shape of the algorithm. 
The algorithm maintains a generalized knot diagram and the underlying simple graph, obtained by ignoring all markings and ignoring the difference between double and single edges. Let us call this underlying
simple graph $G^s$. Note that $G^s$ has treewidth at most 2, and the degree of a vertex in $G^s$ is its graph degree.

With help of the reduction rules of the next section, we build our linear time algorithm. Our algorithm can be seen as a variation of the 
reduction algorithm for treewidth 2 graphs, as described in Theorem~\ref{theorem:folklorereduce}.

The main form of our algorithm is the following. The input is a knot diagram $K$ of treewidth 2.

\begin{itemize}
\item Subdivide parallel edges in $K$, and subdivide selfloops. $K$ now is a generalized knot diagram, yet without double edges, but without parallel edges or selfloops. Compute the underlying simple graph $G^s$.
\item Repeat till we decided that we have a knot, a link, or $G^s$ has at most three vertices:
\begin{itemize}
\item Take a vertex $v$ of graph degree at most 2.
\item Apply a safe rule (defined in the next section) to $K$, that removes $v$. Let again $G^s$ be the resulting simple graph. 
\end{itemize}
\item If $G^s$ has at most three vertices, then classify the generalized knot diagram $K$ with help of a simple case analysis.
\end{itemize}

In the description above, equivalence is topological: if we transform a generalized knot diagram $K_1$ to a new generalized knot diagram $K_2$, $K_1$ represents a knot diagram that can be obtained by Reidemeister moves from a knot diagram that is represented by $K_2$.

We can distinguish three different cases for the vertex $v$ of degree at 
most 2 in the algorithm above: $v$ has graph degree one, and hence is incident to one double edge; $v$ has graph degree 2 and is incident to two single edges; $v$ has graph degree 2 and is incident to two double edges. Each of these cases is discussed in detail in the next section, where we see how we can transform in constant time the generalized knot diagram to an equivalent one, which has the new $G^s$ as underlying simple graph. As in each round, we lose one vertex, the algorithm has at most
$n$ rounds. 

Standard algorithmic techniques allow to implement the selection of $v$ and maintenance of $G^s$ in linear time. We use the adjacency list data structure to represent $G^s$. In addition, we have a set data structure $S$ for vertices of graph degree at most 2. If we delete $v$,
we check if the neighbors of $v$ have their degree decreased to a value at most 2, and if so, add these to $S$. Selecting $v$ can be done by just taking an element from $S$. We can e.g. use a standard queue for $S$.

What remains is to look at the rules that deal with each of the cases, which is done in the next section.

%\begin{theorem}
%Suppose we have a non-empty generalized knot diagram of treewidth two, with at least three vertices. Then, either one of the rules applies, or we have a graph with three vertices.
%\end{theorem}
%
%
%\begin{theorem}
%All rules are safe.
%\end{theorem}
%
%\begin{theorem}
%Applying rules while possible can be done in $O(n)$ time for treewidth two graphs.
%\end{theorem}

\section{Safe Reduction Rules}
\label{section:rules}

In this section we introduce a number of reduction rules for generalized knot diagrams. The result of a rule is always again a generalized knot
diagram. Note that we always remove one (single or double) vertex of graph degree at most 2, and possibly add an edge between the neighbors of a removed vertex of degree 2. Thus, when any of these rules is applied, the size of the generalized knot diagram is decreased by at least one vertex. 

A rule is \emph{safe} if application of the rule preserves the ambient isotopy of the original and the resulting knots. For all rules in this section, the knots equivalence is maintained due to the fact that each of the rules is either a Reidemeister move(s) or removing the subdivision. 

%The unambiguity property is maintained by construction: we assume that a given generalized knot diagram satisfies the property and we prove that the reduction rules do not create double vertices of graph degree three. We refer to the rules satisfying these two properties as {\em safe} rules.   

%We apply the rules while the graph has at least three vertices.
%As a consequence, all vertices will have a 1-degree that is 1, 2, 3 or 4. Theorem~\ref{theorem:folklorereduce} gives that there is always a vertex of 1-degree 1 or 2. 

%We say that a rule is {\em safe}, if whenever we obtain a generalized knot diagram $G'$ by applying the rule to generalized knot diagram $G$, the knot diagram $\overline{G'}$ can be obtained from the knot diagram $\overline{G}$ by Reidemeister moves (and thus, these represent equivalent knots), and removing or adding subdivisions. 

\subsection{Vertices incident to one double edge}
We first look at vertices of graph degree 1. Such a vertex must be a double vertex, incident to a double edge. We have the following rule.
%Figure~\ref{figure:rule1a} illustrates the rule, for the case when $w$ is incident to two single edges.
%We first consider a double vertex of graph degree one incident to a double edge. 

\begin{myrule}
\label{rule:degree1A}
Let $\{v,w\}$ be a double edge, with the graph degree of $v$ equal to one. Remove $v$ and the edge $\{v,w\}$. If double vertex $w$ was incident to two single edges $e$ and $e'$, make $w$ a single vertex of degree two incident to the same edges and keep the $\{u,d\}$-labels at the endpoint $w$ of $e$ and $e'$ intact; this case is illustrated in Figure~\ref{figure:rule1a}. If $w$ was incident to another double edge, keep $w$ a double vertex formed by two adjacent single vertices.
\end{myrule}

\begin{figure}[htb]
\begin{center}
\includegraphics[scale=0.8]{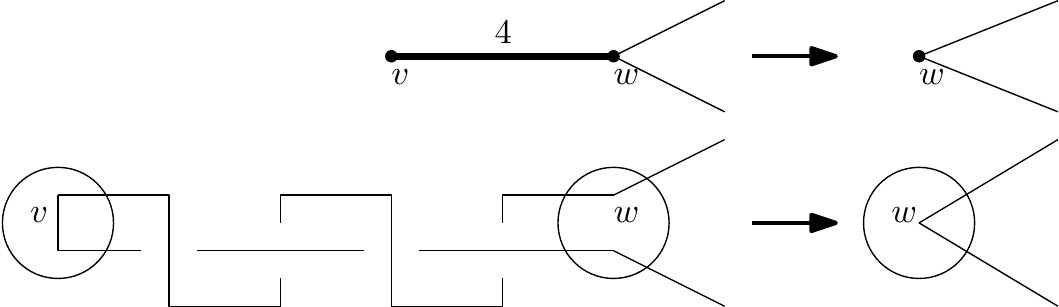}
\end{center}
\caption{Removing a double vertex of graph degree one.}
\label{figure:rule1a}
\end{figure}

\begin{lemma}
Rule~\ref{rule:degree1A} is safe.
\end{lemma}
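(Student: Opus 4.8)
The plan is to interpret the double edge $\{v,w\}$ with $v$ of graph degree one as a \emph{twisted cap}, and to show that Rule~\ref{rule:degree1A} is realized by a sequence of untwist (Reidemeister~I) moves followed by the removal of a trivial subdivision. Since $v$ has graph degree one and diagram degree two, it is, by the classification of double vertices, formed by two adjacent single vertices; hence the two pieces of string represented by the double edge are \emph{joined} at the $v$ end. Thus the double edge is a pair of strands that emanate from $w$, wind around each other $k$ times (where $k$ is the twist label), and are capped off at $v$. Regarding the two joined strands as a single arc, each of the $k$ crossings becomes a self-crossing of that arc, and the crossing nearest $v$ together with the cap forms exactly a Reidemeister~I kink.

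The first step is the untwisting, which I would argue by induction on $k$. The claim is that the crossing adjacent to the cap is always removable by a single untwist move: after removing the outermost crossing, the cap advances past it, the twist label drops by one, and---because, by the definition of a double edge, consecutive crossings strictly alternate between over and under---the new outermost crossing is again a legitimate Reidemeister~I kink (of the opposite sign). Since Reidemeister~I is a local move, none of the labels or crossings at the $w$ end are disturbed. After $k$ untwist moves the double edge carries zero twists; that is, it is a trivial ``finger'' pointing out of $w$ with no crossings, and $v$ now has diagram degree two.

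The second step is to retract this finger using the subdivision removal already permitted in the paper, and then read off the two cases of the rule. If $w$ is incident to two single edges $e$ and $e'$, then the two strands of the zero-twist double edge are precisely the continuations of $e$ and $e'$; straightening the cap joins $e$ and $e'$ through $w$ and leaves their $\{u,d\}$-labels at $w$ untouched, making $w$ a single vertex of degree two, exactly as the rule prescribes. If instead $w$ is incident to a second double edge, then the two strands of that edge become joined at $w$ once the finger is retracted, so $w$ turns into the cap of that edge, i.e.\ a double vertex of graph degree one, again matching the rule. Because each intermediate operation is either a Reidemeister~I move or a subdivision removal, ambient isotopy is preserved throughout, which is what safety requires.

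The main obstacle I expect is not any single computation but the careful verification that the outermost crossing really is a Reidemeister~I kink at \emph{every} step of the untwisting. This rests on combining the cap at $v$ with the strict over/under alternation guaranteed by the definition of a double edge, and on checking that an untwist move on the $v$-side does not alter the labelling data at $w$. Once this is established, the label bookkeeping in the two subcases for $w$ is routine.
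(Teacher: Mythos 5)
Your proposal is correct and follows essentially the same route as the paper's proof: apply the $i$ (your $k$) Reidemeister untwists to eliminate the crossings of the double edge, then remove the resulting subdivided path, splitting into the same two cases according to whether $w$ is incident to two single edges or to another double edge. Your argument is somewhat more careful than the paper's in verifying (by induction, using the over/under alternation) that each outermost crossing is a genuine Reidemeister~I kink, but this is an elaboration of the same idea rather than a different approach.
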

\begin{proof}
Let $\{v,w\}$ be a double edge having $i$ twists. We first apply $i$ Reidemeister untwists to the double edge $\{v,w\}$, resulting in a subdivided path between two single vertices forming the double vertex at $w$. Then, the rule introduces two cases. If $w$ is incident to two single edges, we contract the entire path into a single vertex incident to two single edges. If $w$ is incident to a double edge, we contract the path to an edge between two vertices forming a double vertex. In both cases we execute Reidemeister moves followed by contraction of a subdivided path. As both operations are safe, the rule is safe as well.  
\end{proof}

\subsection{Vertices incident to two single edges}
Consider a single vertex $v$ incident to two single edges $\{v,w\}$ and $\{v,x\}$. We consider three cases: 1) $w$ and $x$ are not adjacent; 2) $w$ and $x$ are adjacent by a single edge; 3) $w$ and $x$ are adjacent by a double edge. Consider the first case.

\begin{myrule}
\label{rule:2simplea}
Let $v$ be incident to two single edges $\{v,w\}$ and $\{v,x\}$, where $w$ and $x$ are not adjacent. Remove $v$ and the edges $\{v,w\}$ and $\{v,x\}$, and add a single edge $\{w,x\}$. Keep the $\{u,d\}$-labels of edge $\{w,x\}$ at $w$ and $x$ the same as in $\{v,w\}$ and $\{v,x\}$, respectively.
\end{myrule}

Rule~\ref{rule:2simplea} is trivially safe as we just remove an edge subdivision. See Figure~\ref{figure:rule2-noedge}.

\begin{figure}[htb]
\begin{center}
\includegraphics[scale=1.0]{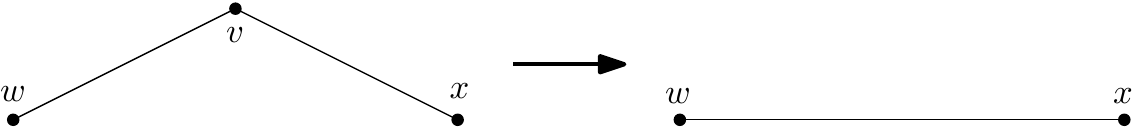}
\end{center}
\caption{Removing a vertex with two single edges and nonadjacent neighbors.}
\label{figure:rule2-noedge}
\end{figure}

Consider the second case where $w$ and $x$ are adjacent by a single edge. 

\begin{myrule}
Let $v$ be incident to two single edges $\{v,w\}$ and $\{v,x\}$, where $w$ and $x$ are adjacent by a single edge. 
\label{rule:degree2ss}
\begin{enumerate} 
%Case 1
\item If all three vertices $v,w,x$ are single vertices of graph degree 2, forming a simple cycle of length three, we do not apply any further rules as the diagram represents the unknot.
%Case 2
\item If $w$ is a single vertex of graph degree 2 and $x$ is a single vertex of graph degree 4, we delete vertices $v$ and $w$ together with edges $\{v,w\}, \{v,x\}$ and $\{w,x\}$. 
%Case 3
\item If $w$ is a single vertex of graph degree 2 and $x$ is a double vertex of graph degree 3, we delete vertices $v$ and $w$ together with edges $\{v,w\}, \{v,x\}$ and $\{w,x\}$, and we make the two vertices forming double vertex $x$ adjacent. 
\begin{figure}[htb]
\begin{center}
\includegraphics[scale=0.85]{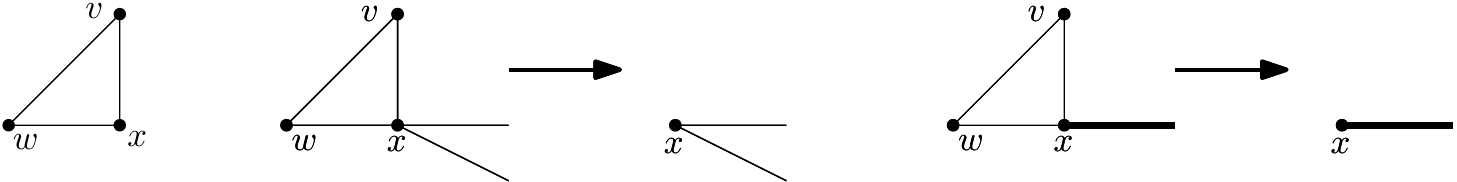}
\end{center}
\caption{The first three cases of Rule~\ref{rule:degree2ss}.}
\label{figure:r2-cases12}
\end{figure}
%Case 4
\item If $w$ and $x$ are two single vertices of graph degree 4 and the $\{u,d\}$-labels of single edges $\{v,w\}$ and $\{v,x\}$ at vertices $w$ and $x$ (respectively) are the same, we delete vertex $v$ together with edges $\{v,w\}, \{v,x\}$ and $\{w,x\}$, and we create a double edge $\{w,x\}$ of 0 twists. To assign the $\{u,d\}$-labels at vertices $w$ and $x$, we split each of these vertices into two non-adjacent copies, one per single edge incident to the vertex. We connect the over-/undercrossing endpoint of $\{w,x\}$ at $w$ with the over-/undercrossing endpoint of $\{w,x\}$ at $x$, respectively, see Figure~\ref{figure:r2-case4}.
% We assign $u$-labels to the overcrossing endpoints, and $d$-labels to the undercrossing ones. 
%\footnote{Hans: I don't think labels are needed. There are no crossings here; see Figure 9.}
\begin{figure}[htb]
\begin{center}
\includegraphics[scale=1.0]{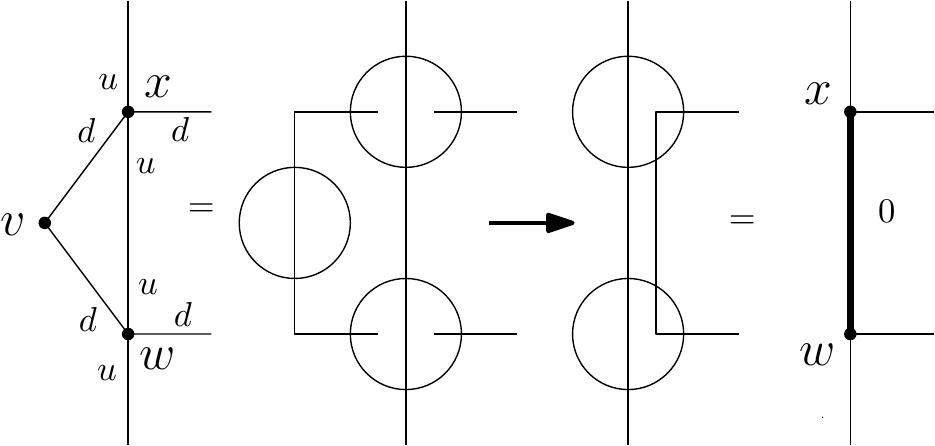}
\end{center}
\caption{The fourth case of Rule~\ref{rule:degree2ss}.}
\label{figure:r2-case4}
\end{figure}
%Case 5
\item If $w$ and $x$ are two single vertices of graph degree 4 and the $\{u,d\}$-labels of single edges $\{v,w\}$ and $\{v,x\}$ at vertices $w$ and $x$ (respectively) are different, we delete vertex $v$ together with edges $\{v,w\}, \{v,x\}$ and $\{w,x\}$, and we create a double edge $\{w,x\}$ of 2 twists. To assign the $\{u,d\}$-labels at vertices $w$ and $x$, we again split each of these vertices into two non-adjacent copies, one per single edge incident to the vertex. We keep the $\{u,d\}$-labels at endpoints of $\{w,x\}$ identical to the labels of the incident single edges, see Figure~\ref{figure:r2-case5}.
\begin{figure}[htb]
\begin{center}
\includegraphics[scale=1.0]{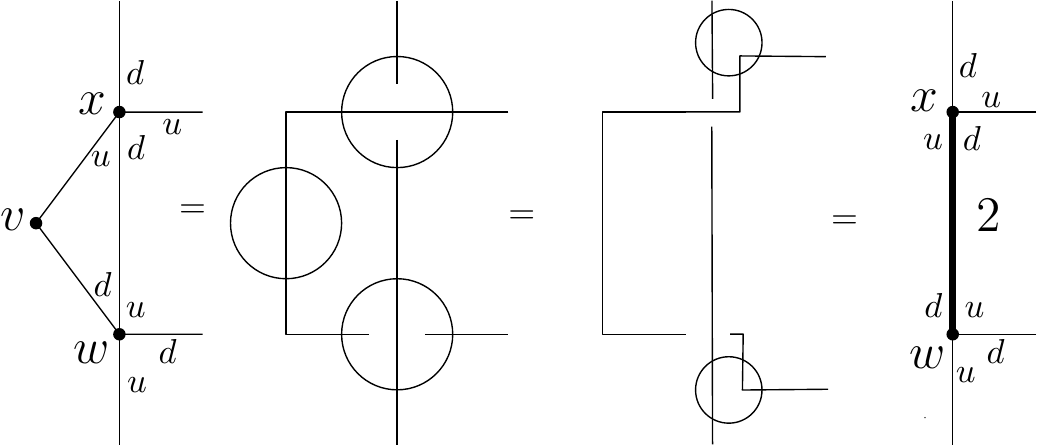}
\end{center}
\caption{The fifth case of Rule~\ref{rule:degree2ss}.}
\label{figure:r2-case5}
\end{figure}
%Case 6
\item If $w$ is a single vertex of graph degree 4 and $x$ is a double vertex of graph degree 3, we delete vertex $v$ together with edges $\{v,w\}, \{v,x\}$ and $\{w,x\}$, we split vertex $w$ into two non-adjacent copies (one per string at the crossing represented by vertex $w$), thus creating a double vertex $w$, and we create a double edge $\{w,x\}$ of 1 twist. Now, we assign the $\{u,d\}$-labels at the endpoints of the new double edge $\{w,x\}$. At double vertex $w$ the endpoint corresponding to the overcrossing (undercrossing) string receives label $u$ (and $d$), respectively. Since the number of twists in double edge $\{w,x\}$ is 1, at double vertex $x$ the labels at the endpoints of the strings alternate, see Figure~\ref{figure:r2-case6}.
\begin{figure}[htb]
\begin{center}
\includegraphics[scale=1.0]{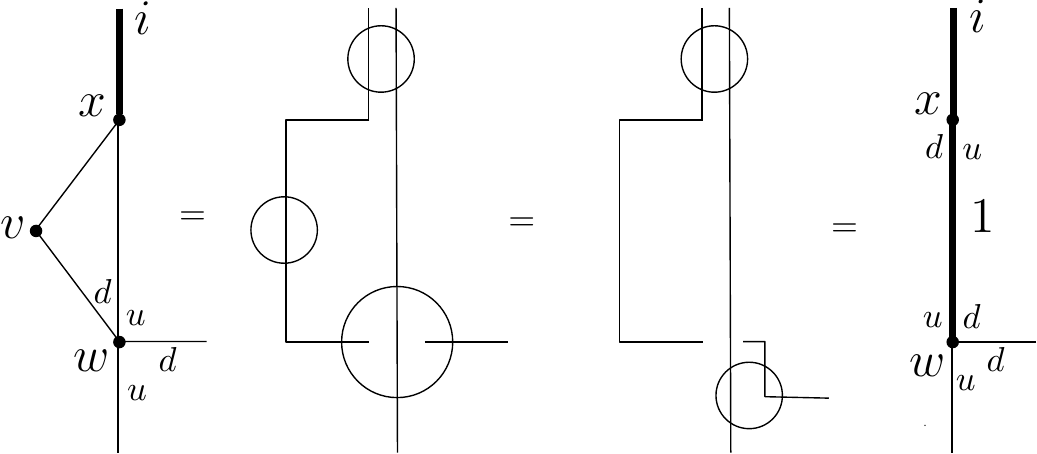}
\end{center}
\caption{The sixth case of Rule~\ref{rule:degree2ss}.}
\label{figure:r2-case6}
\end{figure}
%Case 7
\item If both $x$ and $w$ are double vertices of graph degree 3, we delete vertex $v$ together with edges $\{v,w\}, \{v,x\}$ and $\{w,x\}$, and we create a double edge $\{w,x\}$ of 0 twists, keeping the $\{u,d\}$-labels at endpoints as is.
\end{enumerate}
\end{myrule}

\begin{lemma}
Rule~\ref{rule:degree2ss} is safe.
\end{lemma}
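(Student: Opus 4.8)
The plan is to verify safety case by case, since Rule~\ref{rule:degree2ss} is subdivided into eight distinct configurations. The unifying principle is the one already articulated in the paper: a reduction is safe provided the transformation it performs on the generalized knot diagram is realizable as a sequence of Reidemeister moves (twists, untwists, pokes, unpokes) together with subdivision removals, each of which preserves ambient isotopy. So for every case my goal is to exhibit the underlying topological move and confirm that the bookkeeping of $\{u,d\}$-labels and twist-counts in the resulting double edge faithfully records the crossing information of the strings. First I would dispose of case~(1): a triangle of three diagram-degree-2 single vertices is literally a simple closed curve with no crossings among distinct strands, hence an unknot, and we correctly halt. Cases~(2) and~(3) are the unpoke configurations: vertex $v$ together with the degree-2 vertex $w$ form a bigon-like poke against $x$; deleting them is an unpoke (Reidemeister~II), and in case~(3) reconnecting the two single vertices that constitute double vertex $x$ merely restores the string continuity that the poke had interrupted.

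**Next I would treat the substantive cases~(4)--(7),** which are exactly the configurations that \emph{create} a double edge, and here the heart of the argument lies in checking that the twist count is correct. In each, vertex $v$ is a degree-2 vertex sitting on a single edge between two crossings $w$ and $x$ (where $x$ may be a crossing, i.e.\ degree-4, or already a double vertex). The geometric picture is that the two strands running from $w$ to $x$ (one through $v$, one through the direct edge $\{w,x\}$) bound a region, and the number of times they cross each other inside the contracted double edge is dictated by the relative over/under labels at $w$ and $x$. The plan is to argue that when the $\{u,d\}$-labels agree (case~4) the strands can be pulled apart with \emph{zero} residual twists; when they disagree (case~5) the strands are forced into a configuration equivalent to \emph{two} twists (a full over-under-over-under pattern); when $w$ is a genuine crossing and $x$ a double vertex (case~6) the parity forces \emph{one} twist; and case~(7), with both endpoints already double vertices of graph degree 3, introduces \emph{zero} new twists. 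For each I would draw on the corresponding figure (Figures~\ref{figure:r2-case4}--\ref{figure:r2-case6}) and verify that splitting $w$ (and/or $x$) into two non-adjacent copies, one per strand, correctly encodes the two pieces of string that the double edge is defined to represent, and that the alternation of labels along the twists is consistent with the $ud$/$du$ labeling convention fixed earlier.

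**The main obstacle I anticipate** is justifying the precise twist counts of $0$, $2$, and $1$ in cases~(4), (5), and~(6) as genuine ambient-isotopy invariants rather than artifacts of the picture. The delicate point is that the definition of a double edge requires the two strands to \emph{alternate} over and under at consecutive crossings; so I must confirm that the crossing(s) produced by each reduction really do form a valid alternating twist region, and that no Reidemeister~III (slide) move is secretly needed—consistent with the paper's remark that Theorem~\ref{thm:maincomb} avoids slide moves. Concretely, the worry in case~(5) is whether ``different labels'' genuinely corresponds to a clasp that untwists into exactly two crossings rather than, say, zero; I would resolve this by tracking the over/under data as an honest framing/linking computation: the two diagram-degree-4 vertices $w,x$ each contribute a crossing, and when their strand orientations through $v$ are opposed the local writhe forces a two-crossing twist, whereas agreement lets the strand be isotoped flat. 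The parity argument in case~(6), where one endpoint contributes a crossing and the other does not, naturally yields the intermediate value $1$.

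**Finally I would close with cases~(7) and~(8)** (noting item~8 in the statement as the ``both double vertices'' case with $0$ twists), which are the most purely combinatorial: both neighbors are already double vertices recording strands that do not cross in this local region, so deleting the subdivision $v$ and its triangle, then joining $w$ and $x$ by a $0$-twist double edge, is again just subdivision removal followed by concatenation of two non-crossing strand-pairs, preserving the labels verbatim. Having exhibited each case as a composition of Reidemeister moves and subdivision operations, the safety of the rule follows from the safety of those primitive operations, exactly as in the proof of the safety of Rule~\ref{rule:degree1A}.
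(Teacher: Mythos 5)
Your overall strategy---a case-by-case check that each configuration of Rule~\ref{rule:degree2ss} is realized by Reidemeister moves plus subdivision removals---is exactly the paper's, and your treatment of cases (1) and (4)--(7) is essentially correct. However, you misidentify the topological moves in cases (2) and (3). In case (2) the triangle $v,w,x$ contains only \emph{one} crossing, namely $x$ (both $v$ and $w$ have diagram degree~2), so the arc $x \to v \to w \to x$ is a kink at the single crossing $x$: the move that removes it is a Reidemeister~I \emph{untwist}, not an unpoke. An unpoke eliminates a bigon bounded by two strands that cross \emph{twice}, and no such pair of crossings exists here; a proof built on finding that bigon would fail. In case (3) there is no crossing at all: $x$ is a double vertex whose two constituent degree-2 vertices carry the single edges to $v$ and $w$, so the whole triangle is just a subdivided arc joining the two strands of the double edge at $x$, and the reduction is pure subdivision removal with no Reidemeister move. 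The genuine unpoke in this rule is case~(4), where the strand through $v$ passes over (or under) the strand $\{w,x\}$ at both crossings $w$ and $x$; your description of that case as ``pulled apart with zero residual twists'' captures the right move even though you spent the name ``unpoke'' on the wrong cases.

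For the remaining cases your plan lands where the paper does, but it is more complicated than necessary: in cases (5), (6) and (7) \emph{no} topological move happens at all---the two crossings, one crossing, or zero crossings already present are simply re-encoded as the twist count of the new double edge, so no ``framing/linking computation'' and certainly no hidden Reidemeister~III is needed; the only thing to verify is that the retained crossings satisfy the alternation convention for a double edge, which you do note. Two minor slips: the rule has seven cases, not eight, and your ``case (8)'' is the rule's case (7).
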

\begin{proof}
The first three cases are illustrated in
Figure~\ref{figure:r2-cases12}. The first case is the trivial unknot. In the second case, we do one Reidemeister untwist that removes the crossing at $x$, and subsequently we remove the subdivision. The third case is just removing the subdivision.

The fourth case, illustrated in Figure~\ref{figure:r2-case4} is a Reidemeister unpoke and removing the subdivision. In this case we see we still have a generalized knot diagram 
because each of the newly created double vertices is incident to one double and two single edges by construction.

In the fifth and in the sixth cases, see Figure~\ref{figure:r2-case5} and Figure~\ref{figure:r2-case6}, we keep exactly the same
knot diagram, but the generalized knot diagram representing the knot is changed to one with fewer vertices. The same holds for the seventh case.
\end{proof}

%The fifth, sixth and seventh cases are all done by subdivided edge contraction and construction of a double edge by definition. The unambiguity is maintained again by construction.

\medskip

We now look at the third main case. Suppose $v$ is adjacent by two single edges to
two double vertices, $w$ and $x$, and there is a double edge between $w$ and $x$
having $i$ twists.

\begin{myrule}
\label{rule:4}
Suppose $v$ has two single edges, to
$w$ and $x$ respectively, and there is a double edge between $w$ and $x$
with $i$ twists.
\begin{enumerate}
\item If $i =0$, then the generalized knot diagram represents an unlink. We recurse on the part of the graph having more than three vertices.
\item If $i\neq 1$ is odd, the generalized knot diagram represents a knot;
\item If $i\neq 0$ is even, the generalized knot diagram represents a link.
\item If $i = 1$, then delete vertex $v$ together with adjacent edges and delete double edge $\{w,x\}$, make $w$ and $x$ single vertices adjacent by a single edge.
\end{enumerate}
\label{rule:degree2sd}
\end{myrule}

%Consider the third main case where $w$ and $x$ are adjacent by a double edge of $i$ twists. In this case for three sub-cases no further rules are needed as these sub-cases are completely and uniquely classifiable\footnote{Alex: figures or the text is enough?}:
%\begin{itemize}
%\item If $i=0$, the generalized knot diagram represents an unlink;
%\item If $i\neq 1$ is odd, the generalized knot diagram represents a knot;
%\item If $i\neq 0$ is even, the generalized knot diagram represents a link.
%\end{itemize}

Correctness of the first three cases  is evident; see Figure~\ref{figure:rule4-case123}. 
Safeness of the fourth case follows as this step
represents a single Reidemeister untwist with subsequent contraction of subdivision, see Figure~\ref{figure:rule4-case4}. 

\begin{figure}[htb]
\begin{center}
\includegraphics[scale=1.0]{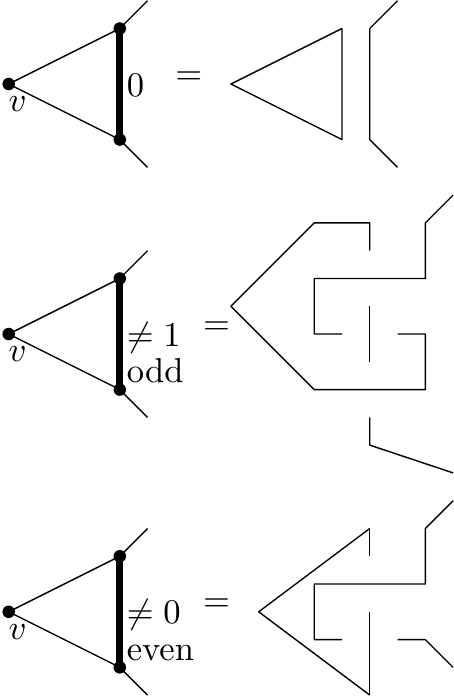}
\end{center}
\caption{The first three cases of Rule~\ref{rule:degree2sd}.}
\label{figure:rule4-case123}
\end{figure}

\begin{figure}[htb]
\begin{center}
\includegraphics[scale=1.0]{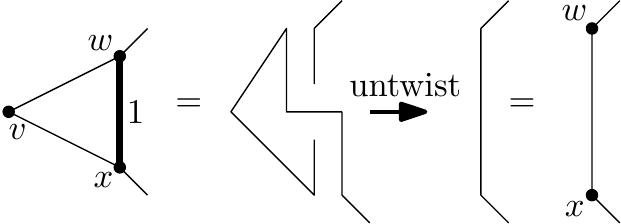}
\end{center}
\caption{The fourth case of Rule~\ref{rule:degree2sd}.}
\label{figure:rule4-case4}
\end{figure}
 
\subsection{Vertices incident to two double edges}  
Consider a double vertex $v$ incident to two double edges $\{v,w\}$ and $\{v,x\}$ of $i$ and $j$ twists, respectively. We consider three cases: 1) $w$ and $x$ are not adjacent; 2) $w$ and $x$ are adjacent by a single edge; 3) $w$ and $x$ are adjacent by a double edge. Consider the first case.

\begin{myrule}
\label{rule:5}
Let $v$ be incident to two double edges $\{v,w\}$ and $\{v,x\}$, where $w$ and $x$ are not adjacent. Remove $v$ and the edges $\{v,w\}$ and $\{v,x\}$, and add a double edge $\{w,x\}$. If the endpoints of $\{w,v\}$ and $\{v,x\}$ at $v$ were agreeing on the $\{u,d\}$-labels, i.e., the labels of the endpoints on the same strings were the same, define the number of twists on the new double edge $\{w,x\}$ by $|i-j|$. If the number of twists $i$ in double edge $\{v,w\}$ is greater than the number of twists $j$ in double edge $\{v,x\}$, keep the $\{u,d\}$-labels at $w$ intact and alternate at $x$, otherwise keep the labels at $x$ and alternate at $w$. 
\begin{figure}[htb]
\begin{center}
\includegraphics[scale=1.0]{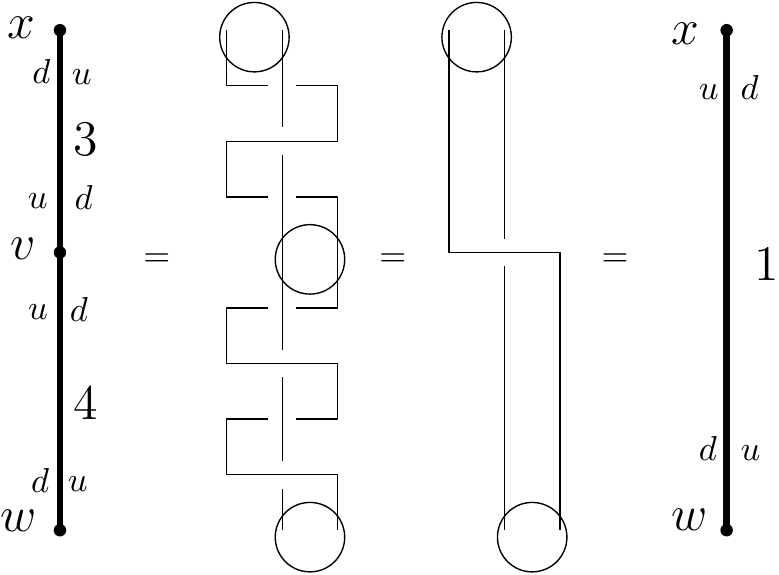}
\end{center}
\caption{Rule~\ref{rule:5} with agreeing labels at $v$.}
\label{figure:rule5-agree}
\end{figure}

If the endpoints were disagreeing on the labels, i.e., the labels of the endpoints on the same strings were different, define the number of twists of edge $\{w,x\}$ by $i+j$. Keep the $\{u,d\}$-labels of edge $\{w,x\}$ at $w$ and $x$ the same as in $\{v,w\}$ and $\{v,x\}$, respectively.
\begin{figure}[htb]
\begin{center}
\includegraphics[scale=1.0]{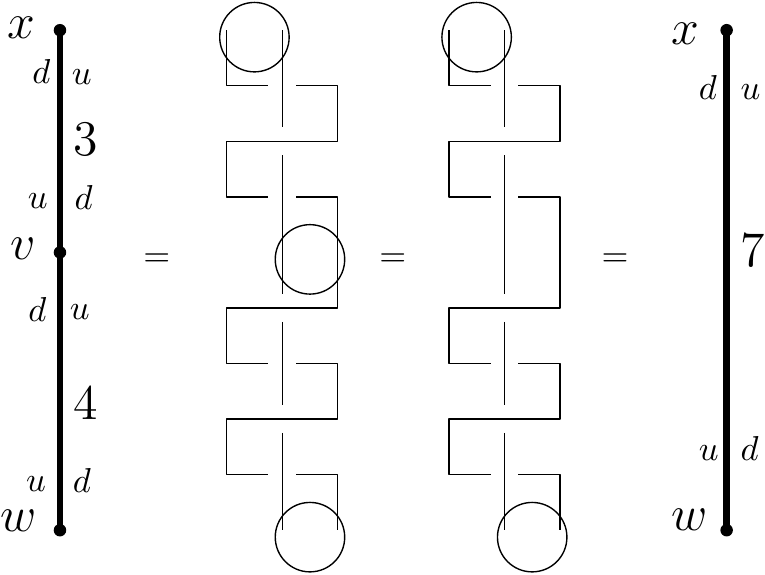}
\end{center}
\caption{Rule~\ref{rule:5} with disagreeing labels at $v$.}
\label{figure:rule5-disagree}
\end{figure}
\end{myrule}

\begin{lemma}
Rule~\ref{rule:5} is safe.
\end{lemma}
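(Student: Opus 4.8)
The plan is to follow the same strategy as the earlier safety lemmas: expand each double edge into its string representation (a twist region), observe that the removed vertex $v$ is merely a connection point between two twist regions, and show that merging the two regions is realised by a sequence of Reidemeister moves together with the removal of a subdivision.

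First I would unfold the picture at $v$. Since $v$ is a double vertex of graph degree two, it is formed by two single vertices, and both of these are subdivision points (diagram degree two, not crossings). Hence the two strings carried by the double edge $\{v,w\}$ pass straight through $v$ and continue as the two strings of the double edge $\{v,x\}$. The resulting configuration is two strings running from $w$ to $x$ with a block of $i$ twists (coming from $\{v,w\}$) immediately followed by a block of $j$ twists (coming from $\{v,x\}$), with $v$ sitting between the two blocks only as a pair of subdivisions.

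Second, I would analyse how the two twist blocks combine, which is where the $\{u,d\}$-labels at $v$ enter. The label of an endpoint of a double edge records which of the two strings is the over-strand there, and traversing a twist block flips this over/under relationship at each crossing. I would then verify, by reading off the labels, that the two cases of the rule correspond exactly to the two possible relative handednesses of the concatenated blocks: when the labels of $\{v,w\}$ and $\{v,x\}$ agree at $v$ the two blocks twist in opposite senses, so each pair of adjacent oppositely-signed crossings forms a removable bigon; when the labels disagree the two blocks twist in the same sense and the crossings simply accumulate. In the agreeing case I would cancel $\min(i,j)$ such bigons, each by one Reidemeister poke/unpoke (a type II move), leaving $|i-j|$ twists whose handedness is that of the larger block; this explains why the rule keeps the labels on the side with more twists and alternates them on the other side, the alternation recording the parity change in the over/under pattern induced by the cancelled crossings. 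In the disagreeing case no crossings are removed and we obtain $i+j$ twists, keeping both labels as they were.

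Finally I would tie the pieces together: after combining the blocks the middle subdivision pair $v$ is removed (a subdivision removal), and $w$ and $x$ receive the labels prescribed by the rule. I would also check that the output is a legal generalized knot diagram, in particular that the single new double edge $\{w,x\}$ leaves $w$ and $x$ in one of the allowed double-vertex types, so that the algorithm's invariants are preserved. Since every step is either a Reidemeister move or the removal of a subdivision, and both preserve ambient isotopy, the rule is safe. The main obstacle I anticipate is precisely the label bookkeeping: confirming that \emph{agree} forces opposite handedness (hence $|i-j|$) and \emph{disagree} forces equal handedness (hence $i+j$), and especially justifying the alternation of the labels on the shorter side, which requires tracking the over/under parity through the cancelled crossings rather than any genuinely new topological idea.
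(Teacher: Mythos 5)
Your proposal is correct and follows essentially the same route as the paper: in the agreeing case the paper likewise performs $\min\{i,j\}$ Reidemeister unpoke moves and then removes the subdivision, and in the disagreeing case it observes that the underlying knot diagram is unchanged and only the subdivision is removed, yielding $i+j$ twists. Your additional attention to the label bookkeeping (why agreement forces cancellation to $|i-j|$ and why the labels alternate on the shorter side) is detail the paper delegates to its figures, but it is the same argument.
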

\begin{proof}
In the case of agreement on labels, see Figure~\ref{figure:rule5-agree}, we proceed with $\min\{i,j\}$ Reidemeister unpoke moves followed by removing the subdivision. In case of label disagreement, see Figure~\ref{figure:rule5-disagree}, we keep exactly the same knot diagram, but simplify the generalized knot diagram by removing the subdivision on the double edge. In the latter case the number of twists on the new double edge is exactly the sum $i+j$ of the twists on the two original double edges. 
\end{proof}

\begin{myrule}
\label{rule:3reduced}
Let $v$ be incident to two double edges, $\{v,w\}$ and $\{v,x\}$, where $w$ and $x$ are adjacent by a single edge. This case can be transformed to the case considered in Rule~\ref{rule:4}. First, we add a subdivision of $\{w,x\}$. Then, we apply Rule~\ref{rule:5} arriving in and applying one of the cases of Rule~\ref{rule:4}, see Figure~\ref{figure:avoidcase2double}. Therefore, we either classify the diagram or safely reduce the graph.
\begin{figure}[htb]
\begin{center}
\includegraphics[scale=1.0]{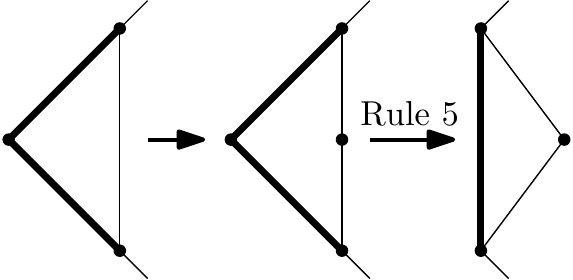}
\end{center}
\caption{Rule~\ref{rule:3reduced} is reduced to rule Rule~\ref{rule:4}}
\label{figure:avoidcase2double}
\end{figure} 
\end{myrule}

%In the second case, when $w$ and $x$ are adjacent by a single edge, $w$ and $x$ have graph degree three and each of them is incident to one double and two single edges. Most of the sub-cases of this case are completely and uniquely classifiable and no rules are further needed. To classify these sub-cases, we have to compute the sum of the twists on both double edges. If the endpoints of $\{w,v\}$ and $\{v,x\}$ at $v$ are agreeing on the $\{u,d\}$-labels, i.e., the labels of the endpoints on the same strings are the same, define $\ell=|i-j|$, otherwise let $\ell=i+j$. The classifiable sub-cases are 
%\begin{itemize}
%\item If $\ell=0$, the diagram represents an unlink;
%\item If $\ell\neq 1$ is odd, the diagram represents a knot;
%\item If $\ell\neq 0$ is even, the diagram represents a link.
%\end{itemize}  

\begin{myrule}
\label{rule:7}
Consider three double vertices $v, w$ and $x$ with three double edges, $\{v,w\}, \{x,v\}$ and $\{w,x\}$ of $i, j$ and $k$ twists, respectively. In this case, the knot diagram is completely and uniquely classifiable. We apply Rule~\ref{rule:5} to  double vertex $v$ obtaining a generalized knot diagram with only two double vertices $w$ and $x$ adjacent by two parallel double edges. The original edge $\{w,x\}$ has $k$ twists. The new edge $\{w,x\}$ has $\ell$ twists, where $\ell$ is either $|i-j|$ or $i+j$ dependent on the case for Rule~\ref{rule:5}, i.e., on the agreement of $\{u,d\}$-labels at $v$. Next, we apply Rule~\ref{rule:5} again to double vertex $w$ obtaining a generalized knot diagram with a single double vertex $x$ and a self-loop with $m$ twists, where $m$ is either $|\ell-k|$ or $\ell+k$ dependent on the agreement of $\{u,d\}$-labels at $w$. Finally, we are ready to classify the knot diagram:
\begin{enumerate}
\item If $m=0$, then the generalized knot diagram represents the unlink;
\item If $m\neq 1$ is odd, the generalized knot diagram represents the $(m,2)$-torus knot, for definition and notations see~\cite{wolfram-torus-knot};
\item If $m\neq 0$ is even, the generalized knot diagram represents the $(m,2)$-torus link;
\item If $m = 1$, then the generalized knot diagram represents the unknot as a single Reidemeister untwist turns the diagram to a circle.
\end{enumerate}
\end{myrule}

\section{Conclusions}
We conclude with the following questions. 
\begin{itemize}
\item 
We gave a linear time algorithm deciding whether any diagram with $n$ crossings and treewidth at most $2$ is a diagram of the unknot.  Is it possible to extend our result to graphs of treewidth $t\geq 3$? Even the existence of a polynomial time algorithm for $t=3$ is open.  
\item 
We also proved that any diagram of treewidth $2$  of the unknot with $n$ crossings may be reduced to the trivial diagram using at most $n$  Reidemeister moves. Is it true, that any diagram of treewidth $t\geq 3$  of the unknot could be reduced to the trivial diagram
in at most $f(t)\cdot n$ moves for some function $f$ of $t$ only?
\item Koenig  and Tsvietkova \cite{koenig2018np}  and 
de Mesmay et al. 
\cite{de2018unbearable}  
proved that deciding if a diagram of the unknot can be untangled using at most $k$ Riedemeister moves (where $k$ is part of the input) is NP-hard. Could this problem be solved in polynomial time on knots with diagrams  of treewidth $2$?

\end{itemize}
%\bibliographystyle{abbrv}
%\bibliography{definitions,papers}

\end{document}